\theoremstyle{plain}
\newtheorem{theorem}{Theorem}
\theoremstyle{definition}
\newtheorem{definition}{Definition}
\newtheorem{remark}[definition]{Remark}
\newcommand{\FF}{\mathbb{F}}
\DeclareMathAlphabet{\mathbfsl}{OT1}{ppl}{b}{it} 
\newcolumntype{Y}{>{\centering\arraybackslash}X} 
\title{Two-Server Private Information Retrieval with Optimized Download Rate and Result Verification\\[-3mm]}
\author{
  \IEEEauthorblockN{Stanislav Kruglik\IEEEauthorrefmark{1}, Son Hoang Dau\IEEEauthorrefmark{2}, Han Mao Kiah\IEEEauthorrefmark{1}, Huaxiong Wang\IEEEauthorrefmark{1}}
	
 \IEEEauthorblockA{\small \IEEEauthorrefmark{1} School of Physical and Mathematical Sciences, 
		Nanyang Technological University, Singapore
    }
 \IEEEauthorblockA{\small \IEEEauthorrefmark{2} School of Computing Technologies, STEM College, RMIT University, Melbourne, Australia
    }
  {\small stanislav.kruglik@ntu.edu.sg, sonhoang.dau@rmit.edu.au, hmkiah@ntu.edu.sg, hxwang@ntu.edu.sg} 
  }
\begin{document}
\date{}


\maketitle

\hspace*{-12pt}
\begin{abstract}
	Private Information Retrieval (PIR) schemes allow a client to retrieve any file of interest, while hiding the file identity from the database servers. 
	In contrast to most existing PIR schemes that assume honest-but-curious servers, we study the case of dishonest servers. 
	The latter provide incorrect answers and try to persuade the client to output the wrong result. 
	We introduce several PIR schemes with information-theoretic privacy and result verification for the case of two servers.
	Security guarantees can be information-theoretical or computational, and the verification keys can be public or private. 
	In this work, our main performance metric is the download rate.
\end{abstract}

\setstretch{0.97}

\section{Introduction}
A {\em private information retrieval} (PIR) scheme allows a user to retrieve a given file $x_i$ from a database $\textbf{x}^T=x_1\cdots x_m$, while keeping its {\em identity} or {\em index} $i\in[m]$ private from the database servers \cite{chor}. 
The problem is motivated by the necessity to preserve the privacy of not only the sensitive content downloaded from public databases, but also the identity of the queried record \cite{overview, overview2}. 
Examples include the price of a specific stock or a specific blockchain transaction. 
A trivial solution is to simply download the entire database and this clearly incurs tremendous communication costs. 
Unfortunately, in the case of a single server, Chor {\em et al.}~\cite{chor} showed that this was the best information-theoretically secure solution.
Nevertheless, in the same seminal paper, Chor {\em et al.}~\cite{chor} showed that when the content are replicated among several servers, the communication cost can be significantly reduced.
Following~\cite{chor}, several authors have introduced PIR schemes that progressively reduced the communications cost~\cite{Yekhanin, Yekhanin2, Dvir}. 
Formally, in this model, the client queries each of the $k$ servers (each stores $x_1\cdots x_m$) once and retrieves $x_i$, 
while keeping the index $i$ private from up to $t$ honest-but-curious servers. 
In PIR literature, such a scheme is called {\em $t$-private $k$-server PIR scheme} and such a property is known as {\em $t$-privacy}. Motivated by the large size of stored files, we ignore the upload cost. In other words, we assume queries are small compared to the downloaded file.
Hence, our main performance metric is the {\em download rate}, defined as the ratio of the retrieved file size to the amount of information downloaded by the user \cite{dr1, dr2, dr3}. 
The {\em PIR capacity} is defined as the maximum achievable download rate and the capacity is shown in \cite{SunJafar} to be equal to
\begin{equation}
C_m(t,k)=\frac{1-t/k}{1-(t/k)^m}.
\end{equation}
Since the number of files is also large, we are interested in asymptotic capacity values. So, we let $m\to \infty$, and we have that
\begin{equation}\label{ascapacity}
C(t,k)\triangleq\lim_{m\to\infty} C_m(t,k)=1-\frac{t}{k}\,.
\end{equation}

Most of the current schemes assume that servers are honest-but-curious and that they provide correct responses.
However, such an assumption cannot be guaranteed within a cloud environment. 
This poses an interesting question: what can the user do if servers provide wrong responses? 
Here, we provide three different interpretations of this question and their formal definitions. 
\begin{itemize}
\item \textit{$s$-verifiablility [referred as $s$-security in \cite{zhangwang}].} The client can detect the presence of up to $s$ servers that persuade the client to output a wrong result.
\item \textit{$a$-accountability}. The client can identify each of up to $a$ servers that persuade the client to output a wrong result.
\item \textit{$b$-byzantine resistance/$b$-byzantine robustness}. The client can retrieve the correct result in presence of up to $b$ servers that persuade the client to output a wrong result.
\end{itemize}

It is clear that $a$-accountability implies $a$-verifiablility, while $b$-byzantine resistance implies both $b$-accountability and $b$-verifiablility. 
We also note that existing verifiable PIR schemes \cite{zhang2, zhao, cao2023} provide accountability - but rely on computational assumptions and require a trusted setup. 
Other previously studied schemes are $b$-byzantine-resistant PIR schemes \cite{BPIR1, BPIR2, BPIR3, BPIR4, BPIR5}. Typically, they rely on error-correcting techniques and require at least three servers, 
while protocols considered in this paper are deployed in the two-server scenario. 

Both verifiable $a$-accountable and $b$-byzantine PIR schemes identify malicious servers. 
However, in certain low-latency applications, such as private media browsing, this may be excessive \cite{multimedia}.
By simply requiring $s$-verifiability, we can then have some savings of communication cost. 
$s$-verifiable PIR schemes were considered in papers \cite{zhang, zhangwang, cao2023}. 
In \cite{zhang, zhangwang}, authors measure the communication cost as the sum of upload and download costs, while in our case, we ignore the upload cost. In another recent paper \cite{cao2023}, authors employ linear-map commitment on the top of different PIR schemes, including PIR with optimal download rate. The fundamental difference between \cite{cao2023} and our approach is the computational assumptions required by employed commitment schemes. 

In this paper, we consider the notion of $s$-verifiability and propose a two-server verifiable PIR scheme with an optimized download rate by modifying a linear secret-sharing-based PIR scheme that achieves PIR asymptotic capacity~\eqref{ascapacity} from \cite{BitarRouayheb}. 
We also propose a generalization that allows the PIR scheme to be publicly verifiable.
After which, to reduce the communication cost, we introduce the use of homomorphic hashing in the verification step \cite{Gkantsidis}. 

\section{Preliminaries}
\subsection{Notation}
For any integer $n>0$ we denote $[n]=\{1,\ldots,n\}$. For any prime number $p$, we denote an extended finite field with $p^t$ elements as $\mathbb{F}_{p^t}$. The base field with $p$ elements is denoted as $\mathbb{F}_{p}$. By superscript
$T$, we denote the transpose of a vector. 
\subsection{PIR Based Secret-Sharing Schemes}
In what follows, we focus on the two-server scenario \textcolor{black}{with secure communications in the system} and add a result verification to the linear secret-sharing-based PIR scheme that achieves PIR capacity~\cite[Proposition~2]{BitarRouayheb}. Let us formally define it and denote as $\Pi_0$. Each server $S_1$ and $S_2$ stores $m$ files $x_1,\ldots,x_m\in\mathbb{F}_{p^t}$ that form the data vector $\mathbf{x}^T=(x_1,\ldots,x_m)$. Queries from client to servers to retrieve the file $i$ are formed by Shamir secret sharing scheme as $\mathbf{f}(u_1)$ and $\mathbf{f}(u_2)$, where $\mathbf{f}(u)=\mathbf{e}_i+\mathbf{r}u\in\mathbb{F}_p^m$, $\mathbf{e}_i$ is all zero vector of length $m$ with a `1' in the position $i$, $\mathbf{r}$ is a random vector over $\mathbb{F}^m_p$, $u_1$ and $u_2$ are pre-selected points over $\mathbb{F}_p$ that correspond to servers one and two. As a response, server $S_j$ computes the scalar product of data vector $\mathbf{x}$ and query $\mathbf{f}(u_j)$ that can be written as $\mathbf{f}(u)\mathbf{x}^T=\mathbf{e}_i^T\mathbf{x}+\mathbf{r}^T\mathbf{x}u=x_i+\mathbf{r}^T\mathbf{x}u$. By collecting responses from two servers, the client can retrieve $x_i$ from
\begin{equation}
\begin{bmatrix}
    x_i\\ 
    \mathbf{r}^T\mathbf{x} 
    \end{bmatrix}
    =
    \begin{bmatrix}
    1 & u_1\\ 
    1 & u_2  
    \end{bmatrix}^{-1}
    \cdot
    \begin{bmatrix}
    \mathbf{f}(u_1)^T\mathbf{x}\\ 
    \mathbf{f}(u_2)^T\mathbf{x}  
    \end{bmatrix}
\end{equation}
We do note that such a scheme has the download rate $\frac{1}{2}$ and its $1$-privacy is ensured by the properties of underlined Shamir's secret sharing scheme. 

\subsection{PIR Model}
Before we proceed further, let us formally define two server verifiable PIR. Our model consist of two servers $S_1$ and $S_2$. Each server stores $m$ files $x_1,\ldots,x_m\in\mathbb{F}_{p^t}$ that form the data vector $\mathbf{x}^T=(x_1,\ldots,x_m)$. The client wants to privately retrieve the correct value of $x_i$, while one server can be malicious and provide a wrong response. 
\begin{definition}[PIR with results verification]
A two-server PIR scheme $\Pi$ with {\em results verification} consists of three algorithms that can be described as follows:
\begin{itemize}
    \item $(\textrm{vk}, \sigma_1, \sigma_2, \textrm{aux}_A, \textrm{aux}_V)\leftarrow \texttt{QueriesGen}(m, i)$ is a randomized query-generating algorithm for the client. As input, it takes the database size $m$ and retrieval index $i$ and outputs two queries $\sigma_1$ and $\sigma_2$ that will be given to servers $S_1$ and $S_2$, value $\textrm{vk}$ further employed by the client for verification purposes and, if necessary, auxiliary information $\textrm{aux}_A$ used in answer-generation and  $\textrm{aux}_V$ used in verification.  
    \item $\pi_j\leftarrow\texttt{AnswerGen}(j, \sigma_j,\textbf{x}, \textrm{aux}_A)$ is a deterministic answer-generation algorithm for server $S_j$. As input it takes server number $j$, query $\sigma_j$, database $\textbf{x}$, and, if necessary, auxiliary information $\textrm{aux}_A$, and outputs the query response $\pi_j$\,.
    \item $\{x_i,\perp\}\leftarrow\texttt{Verify}(i, \textrm{vk}, \pi_1, \pi_2, \textrm{aux}_V)$ is a deterministic verification algorithm for the client. As input, it takes the retrieval index $i$, verification key $\textrm{vk}$, servers answers $\pi_1$, $\pi_2$, if necessary, auxiliary information $\textrm{aux}_V$, and uses them to determine if it can reconstruct the correct value of $x_i$. If it cannot do so, it outputs the special symbol $\perp$ indicating that at least one of the answers is incorrect, otherwise, it reconstructs $x_i$ and outputs it.
\end{itemize}
\end{definition}

A scheme is called {\em publicly verifiable} if $\textrm{vk}$ is public. Otherwise, the scheme is {\em privately verifiable}. 
Public verification is generally preferred, but usually relies on computational assumptions \textcolor{black}{since it involves disclosing of verification key}, while privately verifiable schemes can be information-theoretically secure. Considered PIR scheme should satisfy correctness, privacy, and security properties. Let us formally define them. \textcolor{black}{Intuitively, scheme~$\Pi$ is correct if it always outputs the correct value of $x_i$ when both servers are honest.}
\begin{definition}[Correctness] The scheme $\Pi$ is {\em correct} if for any $m$, $i\in[m]$ and $\mathbf{x}\in\mathbb{F}_{p^t}^m$ and any $(\textrm{vk},\sigma_1,\sigma_2, \textrm{aux}_A, \textrm{aux}_V)\leftarrow\texttt{QueriesGen}(m,i)$ it holds that  $\texttt{Verify}(i, \textrm{vk},\texttt{AnswerGen}(1,\sigma_1,\textbf{x}, \textrm{aux}_A), \texttt{AnswerGen}(2,\sigma_2,$
$\textbf{x},\textrm{aux}_V)=x_i$.
\end{definition}
\textcolor{black}{The scheme~$\Pi$ is said to be private if each server gets no extra information about the retrieval index $i$.}
\begin{definition}[Privacy] The scheme $\Pi$ is {\em private} if for any $m$, any $i,i'\in[m]$ and $(\textrm{vk}, \sigma_1, \sigma_2, \textrm{aux}_A, \textrm{aux}_V)\leftarrow\texttt{QueriesGen}(m, i)$, $(\textrm{vk}', \sigma'_1, \sigma'_2, \textrm{aux}_A, \textrm{aux}_V)\leftarrow\texttt{QueriesGen}(m, i')$ and any $j\in[2]$, queries $\sigma_j$ and $\sigma'_j$ are identically distributed. The latter means that a given server $j$ cannot distinguish between them.      
\end{definition}
\textcolor{black}{Intuitively, the scheme~$\Pi$ is verifiable if no single server can cause the client to output a wrong value $x_i$ by providing undetected wrong answer.} Following \cite{VC1} and recent papers on multi-server verifiable computations \cite{zhangwang, zhang}, 
the verifiablity property can be defined through the notion of security experiment \textcolor{black}{between more powerful adversary and challenger acting as a client}: 
an adversary $\mathcal{A}$ controls the dishonest server $S_j$, 
knows the database $\mathbf{x}$, retrieval index $i$, and, if necessary, the auxiliary information $\textrm{aux}_A$ and crafts an answer $\hat{\pi}_j$ after receiving the query $\sigma_j$. 
We consider this setup for the privately verifiable case and we denote the corresponding experiment as $\textrm{EXP}_{\mathcal{A},\Pi}^{PriV}(m,\mathbf{x},i,j)$. 
For the publicly verifiable case, we borrow the same ideas from \cite{zhangwang} and 
generalize the security experiment of \cite{VC2} in a single-server setting to our two-server setup. 
The resulting security experiment $\textrm{EXP}_{\mathcal{A},\Pi}^{PubV}(m,\mathbf{x},i,j)$ is identical to $\textrm{EXP}_{\mathcal{A},\Pi}^{PriV}(m,\mathbf{x},i,j)$ except that adversary also knows verification key $\textrm{vk}$. 
Let us formally define them.
\begin{definition}[Security experiment]
An {\em interactive security experiment}  $\textrm{EXP}_{\mathcal{A},\Pi}^{PriV}(m,\mathbf{x},i,j)/\textrm{EXP}_{\mathcal{A},\Pi}^{PubV}(m,\mathbf{x},i,j)$ between adversary $\mathcal{A}$ that controls dishonest server $j\in[2]$ and its challenger in privately/publicly verifiable case can be described as follows:
\begin{itemize}
    \item Challenger generates $(\textrm{vk}, \sigma_1, \sigma_2, \textrm{aux}_A, \textrm{aux}_V)\leftarrow\texttt{QueriesGen}(m, i)$ and sends $\sigma_j$ to the adversary $\mathcal{A}$.
    \item The adversary $\mathcal{A}$ generates a crafted answer $\hat{\pi}_j\leftarrow\mathcal{A}(j, \sigma_j, \mathbf{x},i, \textrm{aux}_A)/\hat{\pi}_j\leftarrow\mathcal{A}(j, \sigma_j, \mathbf{x},i, \textrm{vk},\textrm{aux}_A)$ and sends it to the challenger.
    \item The challenger computes $\pi_{3-j}\leftarrow\texttt{AnswerGen}(3-j,\sigma_{3-j},\textbf{x},\textrm{aux}_A)$
    \item The challenger runs the verification algorithm $\texttt{Verify}$ with inputs $i$, $\textrm{vk}$, $\hat{\pi}_j$ and $\pi_{3-j}$, if necessary, $\textrm{aux}_V$ and computes an output $y$.
    \item If $y\notin\{x_i,\perp\}$ set the outcome $\textrm{EXP}_{\mathcal{A},\Pi}^{PriV}(m,\mathbf{x},i,j)/\textrm{EXP}_{\mathcal{A},\Pi}^{PubV}(m,\mathbf{x},i,j)$ of the experiment to be $1$, otherwise set it to be $0$.
\end{itemize}
\end{definition}
In what follows, we define the following two notions of verifiablity -- information-theoretical and computational. 

\begin{definition}[Negligible function]
A function from $\mathbb{N}$ to $\mathbb{R}^{+}$ is {\em negligible} and denoted as $\texttt{negl}(.)$ if for all $c>0$ there exists a natural number $\lambda_0$ such that $\texttt{negl}(\lambda)<\frac{1}{\lambda^c}$ for all $\lambda>\lambda_0$.
\end{definition}

\begin{definition}[Information-theoretical virifiability]
The protocol $\Pi$ is {\em $(1,\epsilon)$-verifiable} if for any adversary $\mathcal{A}$, any $j\in[2]$, $m$, $\mathbf{x}\in\mathbb{F}_{p^t}^m$ and any $i\in[m]$, we have $\textrm{Pr}[\textrm{EXP}_{\mathcal{A},\Pi}^{PriV}(m,\mathbf{x},i,j)=1]\leq\epsilon$. 
Here, the probability is taken over the randomness of $\mathcal{A}$ and the experiment. 
\end{definition} 

\begin{definition}[Computational verifiability]
The scheme $\Pi$ is $1$-{\em verifiable} if for any probabilistic poly-time (PPT) adversary $\mathcal{A}$, any $j\in[2]$, $m$, $\mathbf{x}\in\mathbb{F}_{q^t}^m$ and any $i\in[m]$, \textcolor{black}{$\textrm{Pr}[\textrm{EXP}_{\mathcal{A},\Pi}^{PriV}(m,\mathbf{x},i,j)]\leq \texttt{negl}(\lambda)/\textrm{Pr}[\textrm{EXP}_{\mathcal{A},\Pi}^{PubV}(m,\mathbf{x},i,j)=1]\leq \texttt{negl}(\lambda)$}, where the probability is taken over the randomness of $\mathcal{A}$ and the experiment. 
\end{definition}

The notion of computational verifiability relies on certain cryptographic assumptions. 
Schemes proposed in this paper rely on the following Discrete logarithm (DLog) assumption. 

\begin{definition}[DLog assumption]
Let $\mathbb{G}$ be a cyclic group of order $p>2^{\lambda}$. For generator $g$ and $\alpha\in\mathbb{F}_{p}\setminus\{0\}$ we define the advantage $\textrm{Adv}_{\mathcal{A}}^{DLog}$ of adversary $\mathcal{A}$ in solving discrete logarithm problem in group $\mathbb{G}$ as the probability that he can find $\alpha$ from values of $g$ and $g^{\alpha}$. We say that discrete logarithm assumption holds if for any PPT adversary $\mathcal{A}$, $\textrm{Adv}_{\mathcal{A}}^{DLog}(\lambda)\leq \texttt{negl}(\lambda)$. 
\end{definition}
\section{Our Constructions}
In this section, we introduce two-server PIR schemes in which the client can verify the result in the presence of one dishonest server. 
The basic idea is to provide one more independent set of queries to the servers, so that the user can compute two versions of $x_i$ or $x_i$ and $\textrm{v}x_i$ for some secret parameter $\textrm{v}$ and verify that both servers are truthful.
If we keep the verification key in secret and download whole responses to queries, we obtain an information-theoretical $(1,\epsilon)$-verifiable scheme with private verification, or shortly $(1,\epsilon)$ privately verifiable scheme (see section~\ref{ITVPIR}). Later on, we generalize this scheme to a publicly verifiable setup (see section~\ref{PUBVER}) and reduce the communication cost by introducing homomorphic hashes (see section~\ref{Hashes}).  

\subsection{Information-Theoretical Privately Verifiable Scheme}\label{ITVPIR}

Let us modify the linear secret-sharing-based PIR scheme $\Pi_0$ to the case of results verification by creating one more set of independent queries. The resulting two server verifiable PIR scheme $\Pi_1$ can be described as follows.
\begin{itemize}
    \item $\texttt{QueriesGen}(m,i)$: Choose $\textrm{v}\leftarrow \mathbb{F}_p\setminus\{0\}$, $\textbf{r}\leftarrow \mathbb{F}_p^m$, $\textbf{r}_v\leftarrow \mathbb{F}_p^m$. Let $\mathbf{f}(u)=\mathbf{e}_i+\mathbf{r}u\in\mathbb{F}_p^m$ and $\mathbf{f}_v(u)=\textrm{v}\mathbf{e}_i+\mathbf{r}_vu\in\mathbb{F}_p^m$, $\mathbf{e}_i$ is all zero vector of length $m$ with a `1' in the position $i$. Compute $\sigma_j=(\mathbf{f}(u_j), \mathbf{f}_v(u_j))$ for all $j\in[2]$. Output $\textrm{vk}=\textrm{v}$, $\sigma_1$, $\sigma_2$, and $\textrm{aux}_V=\{u_1, u_2\}$. 
    \item $\texttt{AnswerGen}(j, \sigma_j, \mathbf{x})$: Parse $\sigma_j$ as $(\mathbf{f}(u_j), \mathbf{f}_v(u_j))$, compute $z_j=\mathbf{f}(u_j)^T\mathbf{x}$, $w_j=\mathbf{f}_v(u_j)^T\mathbf{x}$, output $\pi_j=(z_j, w_j)$.
    \item $\texttt{Verify}(i,\textrm{vk},\pi_1,\pi_2, \textrm{aux}_V)$: Parse $\textrm{aux}_V$ as $\{u_1, u_2\}$. Retrieve $x_i$ and $\textrm{v}x_i$ as 
    \begin{equation}\label{1}
\begin{bmatrix}
    x_i\\ 
    \mathbf{r}^T\mathbf{x} 
    \end{bmatrix}
    =
    \begin{bmatrix}
    1 & u_1\\ 
    1 & u_2  
    \end{bmatrix}^{-1}
    \cdot
    \begin{bmatrix}
    z_1\\ 
    z_2  
    \end{bmatrix}
        =
    \begin{bmatrix}
    a & b\\ 
    c & d  
    \end{bmatrix}
    \cdot
    \begin{bmatrix}
    z_1\\ 
    z_2  
    \end{bmatrix}
\end{equation}
and 
\begin{equation}\label{2}
\begin{bmatrix}
    \textrm{v}x_i\\ 
    \mathbf{r}_v^T\mathbf{x} 
    \end{bmatrix}
    =
    \begin{bmatrix}
    1 & u_1\\ 
    1 & u_2  
    \end{bmatrix}^{-1}
    \cdot
    \begin{bmatrix}
    w_1\\ 
    w_2 
    \end{bmatrix}
        =
    \begin{bmatrix}
    a & b\\ 
    c & d  
    \end{bmatrix}
    \cdot
    \begin{bmatrix}
    w_1\\ 
    w_2  
    \end{bmatrix}
\end{equation}

If the equation below holds, output $x_i$; otherwise, output $\perp$.
\begin{equation}
 \textrm{v}\left(az_1+bz_2\right)=aw_1+bw_2\end{equation}
\end{itemize}
The proofs of correctness of our scheme and proof of privacy are almost identical to that of~\cite{BitarRouayheb} and omitted here. The download rate is equal to $\frac{1}{4}$ and below we show that $\Pi_1$ is $(1,\epsilon)$-verifiable.

\begin{theorem}

The scheme $\Pi_1$ is $(1,\epsilon)$-verifiable where $\epsilon=\frac{1}{p-1}$.
\end{theorem}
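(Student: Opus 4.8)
The plan is to analyze the security experiment $\textrm{EXP}_{\mathcal{A},\Pi_1}^{PriV}(m,\mathbf{x},i,j)$ directly and show that, conditioned on any fixed behavior of the adversary, the probability that the verification check passes while the output differs from $x_i$ is at most $1/(p-1)$, with the probability taken over the client's choice of the secret $\textrm{v}\in\mathbb{F}_p\setminus\{0\}$. The key observation is that the adversary controlling server $S_j$ receives only $\sigma_j=(\mathbf{f}(u_j),\mathbf{f}_v(u_j))$, and since $\mathbf{r}$ and $\mathbf{r}_v$ are uniform and independent, $\mathbf{f}_v(u_j)=\textrm{v}\mathbf{e}_i+\mathbf{r}_v u_j$ is uniformly distributed and independent of $\textrm{v}$ (the mask $\mathbf{r}_v u_j$ perfectly hides $\textrm{v}\mathbf{e}_i$ when $u_j\neq 0$); likewise $\mathbf{f}(u_j)$ leaks nothing about $\textrm{v}$. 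Hence the adversary's view is statistically independent of $\textrm{v}$, so the crafted answer $\hat\pi_j=(\hat z_j,\hat w_j)$ it produces is independent of $\textrm{v}$ as well.

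Next I would set up the algebra of the verification equation. Write the honest answers from the other server as $z_{3-j}$, $w_{3-j}$, and let the adversary's deviations be $\delta = \hat z_j - z_j$ and $\gamma = \hat w_j - w_j$. Using \eqref{1}–\eqref{2}, the reconstructed first coordinate becomes $x_i + (\text{coeff})\cdot\delta$ and the reconstructed "verification" value becomes $\textrm{v}x_i + (\text{coeff}')\cdot\gamma$, where the coefficients are the relevant entries of the inverse Vandermonde matrix ($a$ or $b$ depending on whether $j=1$ or $j=2$) and are nonzero since $u_1\neq u_2$. The check $\textrm{v}(az_1+bz_2)=aw_1+bw_2$ then reduces to a linear equation of the form $\textrm{v}\cdot(\alpha\delta) = \alpha\gamma$ for a fixed nonzero scalar $\alpha$ — i.e. $\textrm{v}\,\delta = \gamma$. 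The experiment outputs $1$ precisely when $\delta\neq 0$ (so the output is $x_i + \alpha\delta \neq x_i$) and the check still passes, i.e. $\textrm{v} = \gamma/\delta$.

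Since $\delta$ and $\gamma$ are determined by the adversary's view, which is independent of $\textrm{v}$, the quantity $\gamma/\delta$ (well-defined because we are in the case $\delta\neq 0$) is a fixed element of $\mathbb{F}_p$ independent of $\textrm{v}$. The client drew $\textrm{v}$ uniformly from the $p-1$ nonzero elements, so $\textrm{Pr}[\textrm{v}=\gamma/\delta]\le 1/(p-1)$. Averaging over the adversary's randomness and the choice of $\mathbf{r},\mathbf{r}_v,u_1,u_2$ preserves this bound, giving $\textrm{Pr}[\textrm{EXP}^{PriV}=1]\le 1/(p-1)=\epsilon$. The step I expect to require the most care is the independence argument: I must verify that not only is $\textrm{v}$ hidden from $S_j$ given $\sigma_j$, but that it remains hidden jointly with everything the adversary could condition on when choosing $(\hat z_j,\hat w_j)$ — in particular that $x_i$, $i$, and $\mathbf{x}$ (which the adversary knows) do not, together with $\sigma_j$, pin down $\textrm{v}$; this is where the independence of the fresh randomness $\mathbf{r}_v$ from $\mathbf{r}$ and from $\textrm{v}$ is essential, and I would state it as a short lemma before the main computation.
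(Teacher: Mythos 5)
Your proposal is correct and follows essentially the same route as the paper's own proof: reduce the winning condition to the linear equation $\textrm{v}\,\delta=\gamma$ (the paper writes this as $a(\Delta_1-\textrm{v}\Delta_0)=0$), observe that $\delta,\gamma$ are determined by the adversary's view which is independent of $\textrm{v}$, and bound the probability that the uniformly random $\textrm{v}\in\mathbb{F}_p\setminus\{0\}$ hits the single admissible root. The only cosmetic difference is that the paper invokes the Schwartz--Zippel lemma for this degree-one polynomial whereas you count the root directly, and you are somewhat more explicit about \emph{why} $\sigma_j$ is statistically independent of $\textrm{v}$ (the masking by $\mathbf{r}_v u_j$ with $u_j\ne 0$), a point the paper asserts without elaboration.
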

\begin{proof}
Without loss of generality, assume that Adversary $\mathcal{A}$ controls the server $S_1$ and set $j=1$. Let $\pi_1=(z_1, w_1)$ and $\pi_2=(z_2, w_2)$ be the answers obtained by correctly executing algorithm $\texttt{AnswerGen}$ by each server. Let $\hat{\pi}_1=(\hat{z}_{1}, \hat{w}_{1})$ be the answer chosen by $\mathcal{A}$ for $S_1$.

From the description of $\Pi_1$ it is clear that 
$$x_i=A=az_1+bz_2$$ and $$\textrm{v}x_i=V=aw_1+bw_2$$ while 
$$\hat{x}_i=\hat{A}=a\hat{z}_{1}+bz_2$$ and
$$\hat{\textrm{v}}\hat{x}_i=\hat{V}=a\hat{w}_{1}+bw_2.$$

$\mathcal{A}$ wins the security experiment $\textrm{EXP}_{\mathcal{A},\Pi_1}^{PriV}(m,\mathbf{x},i,j)$ if the $\hat{A}\ne A$ and $\hat{V}=\textrm{v}\hat{A}$. It is clear that $V=\textrm{v}A$.
Hence, in this case, $$\hat{V}-V=\textrm{v}(\hat{A}-A).$$
From equations~\eqref{1}, \eqref{2} and the fact that server $S_2$ is honest, it is clear that
\begin{equation}
\hat{A}-A=a\left(\hat{z}_1-z_1\right)=a\Delta_0\ne0
\end{equation}
\begin{equation}
\hat{V}-V=a\left(\hat{w}_1-w_1\right)=a\Delta_1
\end{equation}
Hence $\mathcal{A}$  wins the security  experiment $\textrm{EXP}_{\mathcal{A},\Pi_1}^{PriV}(m,\mathbf{x},i,j)$ iff it finds the solution for the equation
\begin{equation}
G(\textrm{v})=a\Delta_1-a\textrm{v}\Delta_0=a(\Delta_1-\textrm{v}\Delta_0)=0.
\end{equation}
From the description of the security experiment, it follows that $\Delta_0, \Delta_1$ are known to $\mathcal{A}$ and independent from $\textrm{v}$. Hence 
\begin{equation}
    \textrm{Pr}[\textrm{EXP}_{\mathcal{A},\Pi_1}^{PriV}(m,\mathbf{x},i,j)=1]\leq\textrm{Pr}[G(\textrm{v})=0],
\end{equation}
for $\textrm{v}$ chosen independently and uniformly at random from $\mathbb{F}_p\setminus\{0\}$. \textcolor{black}{Since $\Delta_0\ne0$,} by Schwartz-Zippel Lemma \cite{SZ1, SZ2} this value can be estimated from above as $\frac{1}{p-1}$, and the theorem statement follows.
\end{proof}

\subsection{Computational Publicly Verifiable Scheme}\label{PUBVER}

Next, we modify information-theoretical privately verifiable scheme $\Pi_1$ to a publicly verifiable setup. 
To do so, we choose a cyclic group $\mathbb{G}$ with generator $g$ of prime order $p\geq 2^{\lambda}$ and made $\textrm{vk}=g^{\textrm{v}}$ public. The resulting PIR scheme $\Pi_2$ can be described as follows.

\begin{itemize}
    \item $\texttt{QueriesGen}(m,i)$: Choose $\textcolor{black}{\textrm{v}}\leftarrow \mathbb{F}_p\setminus\{0\}$, $\textbf{r}\leftarrow \mathbb{F}_p^m$, $\textbf{r}_v\leftarrow \mathbb{F}_p^m$. Let $\mathbf{f}(u)=\mathbf{e}_i+\mathbf{r}u\in\mathbb{F}_p^m$ and $\mathbf{f}_v(u)=\textrm{v}\mathbf{e}_i+\mathbf{r}_vu\in\mathbb{F}_p^m$. 
    Compute $\sigma_j=(\mathbf{f}(u_j), \mathbf{f}_v(u_j))$ for all $j\in[2]$. Choose a cyclic group $\mathbb{G}$ of prime order $p$ with generator $g$.  Output $\textrm{vk}=g^{\textrm{v}}$, $\sigma_1$, $\sigma_2$ and $\textrm{aux}_V=\{u_1, u_2\}$.
    \item $\texttt{AnswerGen}(j, \sigma_j, \mathbf{x})$: Parse $\sigma_j$ as $(\mathbf{f}(u_j), \mathbf{f}_v(u_j))$, compute $z_j=\mathbf{f}(u_j)^T\mathbf{x}$, $w_j=\mathbf{f}_v(u_j)^T\mathbf{x}$, output $\pi_j=(z_j, w_j)$.
    \item $\texttt{Verify}(i,\textrm{vk},\pi_1,\pi_2, \textrm{aux}_V)$: Parse $\textrm{aux}_V$ as $\{u_1, u_2\}$. Retrieve $x_i$ and $\textrm{v}x_i$ as 
    \begin{equation}\label{3}
\begin{bmatrix}
    x_i\\ 
    \mathbf{r}^T\mathbf{x} 
    \end{bmatrix}
    =
    \begin{bmatrix}
    1 & u_1\\ 
    1 & u_2  
    \end{bmatrix}^{-1}
    \cdot
    \begin{bmatrix}
    z_1\\ 
    z_2  
    \end{bmatrix}
        =
    \begin{bmatrix}
    a & b\\ 
    c & d  
    \end{bmatrix}
    \cdot
    \begin{bmatrix}
    z_1\\ 
    z_2 
    \end{bmatrix}
\end{equation}
and 
\begin{equation}\label{4}
\begin{bmatrix}
    \textrm{v}x_i\\ 
    \mathbf{r}_v^T\mathbf{x} 
    \end{bmatrix}
    =
    \begin{bmatrix}
    1 & u_1\\ 
    1 & u_2  
    \end{bmatrix}^{-1}
    \cdot
    \begin{bmatrix}
    w_1\\ 
    w_2 
    \end{bmatrix}
        =
    \begin{bmatrix}
    a & b\\ 
    c & d  
    \end{bmatrix}
    \cdot
    \begin{bmatrix}
    w_1\\ 
    w_2 
    \end{bmatrix}
\end{equation}

If the equation below holds, output $x_i$; otherwise, output $\perp$:
\begin{equation}
 (g^{\textrm{v}})^{\left(az_1+bz_2\right)}=g^{aw_1+bw_2}.\end{equation}
\end{itemize}
The proofs of correctness of our scheme and proof of privacy are almost identical to that of  \cite{BitarRouayheb} and omitted here. The download rate is equal to $\frac{1}{4}$. 

\begin{theorem}
The scheme $\Pi_2$ is $1$-verifiable under DLog assumption in $\mathbb{G}$.
\end{theorem}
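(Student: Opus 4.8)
The plan is to reduce breaking the $1$-verifiability of $\Pi_2$ to solving the discrete logarithm problem in $\mathbb{G}$, mirroring the structure of the previous proof but replacing the information-theoretic Schwartz--Zippel argument with a reduction. Suppose, for contradiction, that there is a PPT adversary $\mathcal{A}$ (say controlling $S_1$, so $j=1$) that wins $\textrm{EXP}_{\mathcal{A},\Pi_2}^{PubV}(m,\mathbf{x},i,1)$ with non-negligible probability. As in the proof for $\Pi_1$, write the honest answers as $\pi_1=(z_1,w_1)$, $\pi_2=(z_2,w_2)$ and the crafted answer as $\hat{\pi}_1=(\hat{z}_1,\hat{w}_1)$, set $\Delta_0=\hat{z}_1-z_1$ and $\Delta_1=\hat{w}_1-w_1$, and note that because $S_2$ is honest the reconstructed quantities satisfy $\hat{A}-A=a\Delta_0$ and $\hat{V}-V=a\Delta_1$, with $V=\textrm{v}A$. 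Winning requires $\hat{A}\neq A$, i.e. $a\Delta_0\neq 0$, together with the verification equation $(g^{\textrm{v}})^{\hat{A}}=g^{\hat{V}}$, which forces $\textrm{v}(\hat{A}-A)=\hat{V}-V$ in $\mathbb{F}_p$, i.e. $\textrm{v}\,a\Delta_0=a\Delta_1$ and hence $\textrm{v}=\Delta_1\Delta_0^{-1}$ since $a\Delta_0\neq 0$.

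Now I build the reduction $\mathcal{B}$ against DLog. Given a challenge $(g, h)$ with $h=g^{\textrm{v}}$, $\mathcal{B}$ runs $\texttt{QueriesGen}$ on behalf of the challenger but \emph{without} knowing $\textrm{v}$: it picks $\mathbf{r},\mathbf{r}_v$ as usual, sets $\textrm{vk}=h$, and must produce $\sigma_1=(\mathbf{f}(u_1),\mathbf{f}_v(u_1))$ where $\mathbf{f}_v(u)=\textrm{v}\mathbf{e}_i+\mathbf{r}_v u$. The subtlety is that $\mathbf{f}_v(u_1)$ contains $\textrm{v}$ in the $i$-th coordinate, which $\mathcal{B}$ does not know in $\mathbb{F}_p$; I will address this below. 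Assuming $\mathcal{B}$ can simulate $\sigma_1$ correctly, it receives $\hat{\pi}_1=(\hat{z}_1,\hat{w}_1)$ from $\mathcal{A}$, computes the honest $\pi_1,\pi_2$ (which it can, since $\mathbf{f}_v(u_2)$ is also needed — same issue), forms $\Delta_0=\hat{z}_1-z_1$, $\Delta_1=\hat{w}_1-w_1$, and outputs $\Delta_1\Delta_0^{-1}$ when $\Delta_0\neq 0$. By the analysis of the winning condition, whenever $\mathcal{A}$ wins we have $\textrm{v}=\Delta_1\Delta_0^{-1}$, so $\mathcal{B}$ solves DLog with the same non-negligible probability, contradicting the DLog assumption; hence $\Pr[\textrm{EXP}_{\mathcal{A},\Pi_2}^{PubV}=1]\leq\texttt{negl}(\lambda)$.

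The main obstacle is precisely the simulation step: the honest protocol's queries $\mathbf{f}_v(u_j)$ and answers $w_j$ depend on $\textrm{v}$ as a field element, but the reduction is only handed $g^{\textrm{v}}$. There are two natural ways to resolve this. The cleanest is to observe that $\mathcal{A}$'s winning condition, and thus the extracted value $\Delta_1\Delta_0^{-1}$, is determined by $\Delta_0$ and $\Delta_1$, which in turn depend on $\mathbf{x}$, $\sigma_1$, and $\mathcal{A}$'s randomness; one can have $\mathcal{B}$ sample a uniformly random $\textrm{v}'\in\mathbb{F}_p\setminus\{0\}$ itself, run the experiment honestly with $\textrm{v}'$ but hand $\mathcal{A}$ the public key $h=g^{\textrm{v}}$ from the challenge, and argue that $\mathcal{A}$'s view is computationally indistinguishable from the real view (any distinguisher would itself break DLog, since the only inconsistency is $h$ versus $g^{\textrm{v}'}$) — but this is circular unless handled carefully. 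The more direct fix, which I would adopt, is to note that the security experiment fixes $\mathbf{x}$ and $i$ as (worst-case, non-random) inputs and that $\mathbf{f}_v(u_j)^T\mathbf{x} = \textrm{v}x_i + (\mathbf{r}_v u_j)^T\mathbf{x}$, so $\mathcal{B}$ can compute $g^{w_j} = h^{x_i}\cdot g^{(\mathbf{r}_v u_j)^T\mathbf{x}}$ in the group even without $\textrm{v}$; more importantly, the \emph{query} $\mathbf{f}_v(u_1)$ genuinely requires $\textrm{v}$ in the clear, so the right move is to restructure so that $\mathcal{B}$ is the challenger who itself chose $\textrm{v}$ — i.e., we do \emph{not} reduce to an external DLog challenge but instead show that from a successful $\mathcal{A}$ we can construct a PPT algorithm that, given only $g$ and $g^{\textrm{v}}$ for the $\textrm{v}$ used in a real execution, outputs $\textrm{v}$; here the real challenger's internal $\textrm{v}$ plays the role of the DLog secret, $\mathcal{A}$ never sees $\textrm{v}$ in the clear (only $g^{\textrm{v}}=\textrm{vk}$ and the queries $\mathbf{f}_v(u_j)$), and the extractor reads off $\Delta_1\Delta_0^{-1}$. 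One must then separately check that the queries $\mathbf{f}_v(u_j)$, though they encode $\textrm{v}$ as a field element, do not help the extractor cheat — they are part of $\mathcal{A}$'s input by design, and it is exactly the point of the public-verifiability model that $\mathcal{A}$ sees $\sigma_j$ and $\textrm{vk}$ but not the full state; formalizing this and confirming the reduction's running time is polynomial is where the care is needed.
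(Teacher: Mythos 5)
Your overall plan matches the paper's: a winning adversary is forced into producing a ratio $\Delta_1\Delta_0^{-1}=\textrm{v}$, which should yield a DLog solver. That observation is correct (and in fact more carefully stated than the paper's, which has the fraction inverted in a typo). The genuine gap is in the simulation step, where you stall. You claim the reduction cannot hand $\mathcal{A}$ a properly distributed query $\sigma_1=(\mathbf{f}(u_1),\mathbf{f}_v(u_1))$ without knowing $\textrm{v}$ in the clear, and then propose two fixes, one you yourself flag as circular and the other (``restructure so that $\mathcal{B}$ is the challenger who itself chose $\textrm{v}$'') is not a reduction at all, since a $\mathcal{B}$ that chose $\textrm{v}$ has nothing left to extract. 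The missing observation is simply that the query is perfectly simulatable: since $\mathbf{r}_v$ is uniform over $\mathbb{F}_p^m$ and $u_1\ne 0$, the vector $\mathbf{f}_v(u_1)=\textrm{v}\mathbf{e}_i+\mathbf{r}_v u_1$ is itself uniform over $\mathbb{F}_p^m$ and independent of $\textrm{v}$. So $\mathcal{B}$, given a DLog instance $(g,h)$, samples $\mathbf{r}\leftarrow\mathbb{F}_p^m$ and a uniform $\mathbf{s}\leftarrow\mathbb{F}_p^m$ to serve as $\mathbf{f}_v(u_1)$, sets $\textrm{vk}=h$, hands $\sigma_1=(\mathbf{e}_i+\mathbf{r}u_1,\mathbf{s})$ to $\mathcal{A}$; this view is identical to the real one. $\mathcal{B}$ can also compute the honest $z_1=(\mathbf{e}_i+\mathbf{r}u_1)^T\mathbf{x}$ and $w_1=\mathbf{s}^T\mathbf{x}$, hence $\Delta_0,\Delta_1$, and output $\Delta_1\Delta_0^{-1}$ whenever $\Delta_0\ne0$. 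Whenever $\mathcal{A}$ wins, that output equals $\textrm{v}$, so $\mathcal{B}$ breaks DLog with the same advantage.

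This is exactly what the paper is gesturing at with the (terse, and slightly imprecise) remark that $\Delta_0,\Delta_1$ are ``known to $\mathcal{A}$ and independent from $\textrm{v}$ and $g^{\textrm{v}}$'': the inputs to $\mathcal{A}$ other than $\textrm{vk}=g^{\textrm{v}}$ are statistically independent of $\textrm{v}$, which is precisely what makes the simulation go through. With that one line added, your proposal becomes a complete and cleaner version of the paper's argument; without it, the reduction you promise is never actually constructed.
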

\begin{proof}
Without loss of generality, assume that Adversary $\mathcal{A}$ controls the server $S_1$ and set $j=1$. Let $\pi_1=(\mathbf{f}(u_1)^T\mathbf{x},\mathbf{f}_v(u_1)^T\mathbf{x})=(z_1, w_1)$ and $\pi_2=(\mathbf{f}(u_2)^T\mathbf{x},\mathbf{f}_v(u_2)^T\mathbf{x})=(z_2, w_2)$ be the answers obtained by correctly executing algorithm $\texttt{AnswerGen}$ by each server. Let $\hat{\pi}_1=(\hat{z}_1, \hat{w}_1)$ is answers chosen by $\mathcal{A}$ for $S_1$. 

From the description of $\Pi_1$ it is clear that $$x_i=A=az_1+bz_2$$ and $$\textrm{v}x_i=V=aw_1+bw_2$$ while $$\hat{x}_i=\hat{A}=a\hat{z}_1+bz_2$$ and
$$\hat{\textrm{v}}\hat{x}_i=\hat{V}=a\hat{w}_1+bw_2.$$

$\mathcal{A}$ wins the security experiment $\textrm{EXP}_{\mathcal{A},\Pi_2}^{PubV}(m,\mathbf{x},i,j)$ if the $\hat{A}\ne A$ and $g^{\hat{V}}=g^{\textrm{v}\hat{A}}$.  From equations~\eqref{3},\eqref{4} and the fact that server $2$ is honest it is clear that
\begin{equation}
\hat{A}-A=a\left(\hat{z}_1-z_1\right)=a\Delta_0\ne0
\end{equation}
\begin{equation}
\hat{V}-V=a\left(\hat{w}_1-w_1\right)=a\Delta_1.
\end{equation}
As $g^{V}=(g^{\textrm{v}})^{A}$, $\mathcal{A}$ wins the security experiment $\textrm{EXP}_{\mathcal{A},\Pi_2}^{PubV}(m,\mathbf{x},i,j)$ iff \textcolor{black}{$(g^{\textrm{v}})^{a\Delta_0}=g^{a\Delta_1}$.} From the description of security experiment it follows that $\Delta_0$, $\Delta_1$ are known to $\mathcal{A}$ and independent from $\textrm{v}$ and $g^{\textrm{v}}$. Hence  $\textrm{Pr}[\textrm{EXP}_{\mathcal{A},\Pi_1}^{PriV}(m,\mathbf{x},i,j)=1]$ can be estimated from above as probability of learning the value $\textrm{v}=\frac{a\Delta_0}{a\Delta_1}=\frac{\Delta_0}{\Delta_1}$ where $\Delta_0\ne0$ from the discrete-logarithm relationship in the group $\mathbb{G}$. As a result, the theorem statement follows.
\end{proof}

\begin{table*}[ht!]
\centering
\scalebox{1}{
\begin{tabular}{|l|l|l|l|l|l|l|l|}
\hline
                     & $\Pi_0$ & \cite[Scheme~3]{zhangwang} & $\mathbb{A}$ & $\Pi_1$      & $\Pi_2$          & $\Pi_3$                         \\ \hline
File size            & $t\cdot\log(p)$                                                      & $t\cdot\log(p)$                                                & $t\cdot\log(p)$                                                         & $t\cdot\log(p)$  & $t\cdot\log(p)$       & $t\cdot\log(p)$                  \\ \hline
UC               & $2m\cdot\log(p)$                                                     & $4m\cdot\log(p)$                                               & $4m\cdot\log(p)$                                                        & $4m\cdot\log(p)$ & $4m\cdot\log(p)$      & $4m\cdot\log(p)$                    \\ \hline
DC             & $2t\cdot\log(p)$                                                     & $4t\cdot\log(p)$                                               & $4t\cdot\log(p)$                                                        & $4t\cdot\log(p)$ & $4t\cdot\log(p)$     & $2t\cdot\log(p)+2\cdot\log(r)$    \\ \hline
Pr& $1$                                                            & $\frac{p-1}{p^2-3}$                                         & $\frac{2(p-1)}{(p-2)^2 }$                                                 & $\frac{1}{p-1}$  & $\texttt{negl}(\lambda)$ & $\texttt{negl}(\lambda)$          \\ \hline
1-verif.           & no                                                           & IT                                                     & IT                                                              & IT       & DLog   & DLog      \\ \hline
1-priv.            & IT                                                           & IT                                                     & IT                                                              & IT       & IT            & IT                         \\ \hline
verif.        & no                                                           & private                                                & private                                                         & private  & public        & private                    \\ \hline
\end{tabular}
}
\caption{Comparison of two-server PIR schemes that optimize the download rate}\label{comp}

\end{table*}

\subsection{Computational Privately Verifiable Scheme}\label{Hashes}

The main drawback of scheme $\Pi_1$ is that it doubles the download cost in comparison to capacity-achieving scheme $\Pi_0$. The possible solution to a problem of this kind is to apply homomorphic hashes to the second part of responses $\pi_1$ and $\pi_2$. The construction of homomorphic hashes is based on DLog assumption in the multiplicative group of order $p\geq 2^{\lambda}$ in the finite field and was introduced for the first time for verification of digital content distributed by rateless erasure codes in \cite{Krohn} and further applied for network coding in \cite{Gkantsidis}. Resulted PIR scheme $\Pi_3$ can be described as follows.

\begin{itemize}
    \item $\texttt{QueriesGen}(m,i)$: Choose $\textcolor{black}{\textrm{v}}\leftarrow \mathbb{F}_p\setminus\{0\}$, $\textbf{r}\leftarrow \mathbb{F}_p^m$, $\textbf{r}_v\leftarrow \mathbb{F}_p^m$. Let $\mathbf{f}(u)=\mathbf{e}_i+\mathbf{r}u\in\mathbb{F}_p^m$ and $\mathbf{f}_v(u)=\textrm{v}\mathbf{e}_i+\mathbf{r}_vu\in\mathbb{F}_p^m$.
    Compute $\sigma_j=(\mathbf{f}(u_j), \mathbf{f}_v(u_j))$ for all $j\in[2]$.
    Choose a prime number $r$ so that $r-1$ is divisible by $p$ and random elements $g_1,\ldots,g_t$ of $\mathbb{Z}_r$  of order $p$. Choose a basis $\mathcal{B}$ of $\mathbb{F}_{p^t}$ over $\mathbb{F}_p$. Output $\textrm{vk}=\textrm{v}$, $\sigma_1$, $\sigma_2$, $\textrm{aux}_A=\{g_1,\ldots,g_t,\mathcal{B}\}$ and $\textrm{aux}_V=\{g_1,\ldots,g_t,\mathcal{B}, u_1, u_2\}$.
    \item $\texttt{AnswerGen}(j, \sigma_j, \mathbf{x}, \textrm{aux}_A)$: Parse $\sigma_j$ as $(\mathbf{f}(u_j), \mathbf{f}_v(u_j))$, $\textrm{aux}_A$ as $\{g_1,\ldots,g_t,\mathcal{B}\}$,  compute $z_j=\mathbf{f}(u_j)^T\mathbf{x}$, $w_j=\mathbf{f}_v(u_j)^T\mathbf{x}$. Represent $w_j$ in basis $\mathcal{B}$ of $\mathbb{F}_{p^t}$ over $\mathbb{F}_p$ as $(w_{j,1},\ldots,w_{j,t})^T$ and compute $h(w_j)=\prod_{l=1}^tg_l^{w_{j,l}}\;\;\textrm{mod}\;r$. Output $\pi_j=(z_j, h(w_j))$.
    \item $\texttt{Verify}(i,\textrm{vk},\pi_1, \pi_2, \textrm{aux}_V)$: Parse $\textrm{aux}_V$ as $\{g_1,\ldots,g_t,\mathcal{B}, u_1, u_2\}$.  Retrieve $x_i$ as 
    \begin{equation}\label{5}
\begin{bmatrix}
    x_i\\ 
    \mathbf{r}^T\mathbf{x} 
    \end{bmatrix}
    =
    \begin{bmatrix}
    1 & u_1\\ 
    1 & u_2  
    \end{bmatrix}^{-1}
    \cdot
    \begin{bmatrix}
    z_1\\ 
    z_2  
    \end{bmatrix}
        =
    \begin{bmatrix}
    a & b\\ 
    c & d  
    \end{bmatrix}
    \cdot
    \begin{bmatrix}
    z_1\\ 
    z_2  
    \end{bmatrix}
\end{equation}

Represent $x_i$ in basis $\mathcal{B}$ of $\mathbb{F}_{p^t}$ over $\mathbb{F}_{p}$ as $(x_{i,1},\ldots,x_{i,t})^T$ and compute $h(x_i)=\prod_{l=1}^tg_l^{x_{i,l}}\;\;\textrm{mod}\;r$. 

If the equation below holds, output $x_i$; otherwise, output $\perp$.
\begin{equation}
h(x_i)^{\textrm{v}}=h(w_1)^ah(w_2)^b\;\;\textrm{mod}\;r.
\end{equation}
\end{itemize}

The proofs of correctness of our scheme and proof of privacy are almost identical to that of  \cite{BitarRouayheb} and omitted here. The download rate is equal to $\frac{t\log p}{2(t\log p +\log r)}$ that can be arbitrarily close to $\frac{1}{2}$.

\begin{theorem}
The scheme $\Pi_3$ is $1$-verifiable under DLog assumption in a multiplicative group of order $p$ in a finite field. 
\end{theorem}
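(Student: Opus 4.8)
The plan is to show that any adversary $\mathcal A$ winning $\textrm{EXP}_{\mathcal A,\Pi_3}^{PriV}$ with non-negligible probability either produces a non-trivial collision of the homomorphic hash $h$ — which contradicts the DLog assumption in the order-$p$ subgroup of $\mathbb Z_r^{*}$ — or correctly guesses the secret scalar $\textrm{v}\in\mathbb F_p\setminus\{0\}$, which it can do with probability at most $1/(p-1)$. As in the proofs of Theorems~1 and~2, assume w.l.o.g.\ that $\mathcal A$ controls $S_1$, so it receives $\sigma_1=(\mathbf f(u_1),\mathbf f_v(u_1))$, $\mathbf x$, $i$ and $\textrm{aux}_A=\{g_1,\dots,g_t,\mathcal B\}$, and returns $\hat{\pi}_1=(\hat{z}_1,\hat{H}_1)$ with $\hat{H}_1\in\mathbb Z_r$ arbitrary, while $S_2$ honestly returns $\pi_2=(z_2,h(w_2))$. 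First I would record the algebraic identity behind correctness: in the basis $\mathcal B$ both addition and $\mathbb F_p$-scaling in $\mathbb F_{p^t}$ are coordinatewise modulo $p$ and each $g_l$ has order $p$, so $h$ satisfies $h(y+y')=h(y)h(y')$ and $h(cy)=h(y)^c$ for $c\in\mathbb F_p$; together with the inverse-Vandermonde relations $a+b=1$, $au_1+bu_2=0$ this gives $aw_1+bw_2=\textrm{v}x_i$ and hence the identity $h(x_i)^{\textrm{v}}=h(w_1)^{a}h(w_2)^{b}$ that always holds for honest answers.

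Next I would unwind the winning condition. If $\mathcal A$ wins, the recovered value is $\hat{A}=a\hat{z}_1+bz_2\neq A=x_i$, so (using $a\neq0$, exactly as in the earlier proofs) $\Delta_0:=\hat{z}_1-z_1\neq0$ in $\mathbb F_{p^t}$, and the check passes: $h(\hat{A})^{\textrm{v}}=\hat{H}_1^{a}h(w_2)^{b}$. Dividing this by the honest identity, using the homomorphic properties of $h$ together with $\hat{A}-A=a\Delta_0$, collapses everything to the single equation
\begin{equation*}
\bigl(h(\Delta_0)^{a}\bigr)^{\textrm{v}}=\bigl(\hat{H}_1/h(w_1)\bigr)^{a},\qquad \Delta_0\neq0.
\end{equation*}
Now split on whether $h(\Delta_0)=1$. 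In Case~A, $h(\Delta_0)=1$ with $\Delta_0\neq0$, i.e.\ $(\Delta_0,0)$ is a collision of $h$; I would then build a DLog solver $\mathcal R$ that, on challenge $(g_0,y=g_0^{\alpha})$, samples $g_l=g_0^{s_l}y^{r_l}$ with uniform $s_l,r_l\in\mathbb Z_p$ (so the $g_l$ have exactly the distribution used in $\texttt{QueriesGen}$), simulates the whole experiment for $\mathcal A$, and on a Case-A win reads off the relation $\prod_l g_l^{\Delta_{0,l}}=1$ with $(\Delta_{0,l})\neq0$, whence $\alpha=-(\sum_l s_l\Delta_{0,l})/(\sum_l r_l\Delta_{0,l})$ unless $\sum_l r_l\Delta_{0,l}\equiv0\pmod p$, an event of probability $1/p$ conditioned on $\mathcal A$'s view. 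This yields $\Pr[\text{Case A win}]\le(1-1/p)^{-1}\,\textrm{Adv}_{\mathcal R}^{DLog}(\lambda)=\texttt{negl}(\lambda)$. In Case~B, $h(\Delta_0)\neq1$; since $h(\Delta_0)$ lies in the order-$p$ subgroup of $\mathbb Z_r^{*}$ it generates it, and so does $h(\Delta_0)^{a}$ because $\gcd(a,p)=1$; hence, once $\mathcal A$'s view is fixed — which already fixes $\hat{H}_1$, $w_1$ and $\Delta_0$ — and $a$ is fixed as well, the displayed equation is either unsatisfiable or determines $\textrm{v}$ uniquely modulo $p$. Because the masking of $\sigma_1$ by the uniform $\mathbf r,\mathbf r_v$ and by $u_1\neq0$ makes $\mathcal A$'s view (and $a$) independent of $\textrm{v}$, the scalar $\textrm{v}$ is still uniform on $\mathbb F_p\setminus\{0\}$ in that conditioning, so $\Pr[\text{Case B win}]\le\tfrac{1}{p-1}=\texttt{negl}(\lambda)$ since $p\ge2^{\lambda}$. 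Summing the two cases gives $\Pr[\textrm{EXP}_{\mathcal A,\Pi_3}^{PriV}(m,\mathbf x,i,j)=1]\le\texttt{negl}(\lambda)$.

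The step I expect to be the main obstacle is handling an arbitrary, possibly malformed $\hat{H}_1$ — one not of the form $h(\cdot)$ and possibly outside the order-$p$ subgroup — and arguing it cannot help: it must either force $h(\hat{A})=h(A)$ (a genuine hash collision, hence DLog) or force $\mathcal A$ to commit in advance to the exact power of $\textrm{v}$, which it cannot guess. The accompanying delicate point is making rigorous that $\textrm{v}$ (and the Vandermonde coefficient $a$) are information-theoretically hidden from $\mathcal A$, so that the Case~B bound is genuinely $1/(p-1)$; once these independence claims are isolated, the remainder is the standard Pedersen-style hash-to-DLog reduction of \cite{Krohn,Gkantsidis}.
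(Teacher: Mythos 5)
Your proof is correct and takes the same essential route as the paper's: split on whether the adversary's win is due to a hash collision (reduced to DLog) or, absent a collision, to a correct guess of the secret scalar $\textrm{v}$ (bounded by $1/(p-1)$ exactly as in the analysis of $\Pi_1$). The paper's proof is much terser: it invokes the law of total probability on a loosely defined ``hash collision'' event $HC$ and asserts $\textrm{Pr}[\textrm{win}\mid HC^c]\le\textrm{Pr}[\textrm{EXP}^{PriV}_{\mathcal A,\Pi_1}=1]$ and $\textrm{Pr}[HC]\le\texttt{negl}(\lambda)$ without spelling out either reduction. You supply exactly the two pieces the paper omits: (i) a concrete Pedersen-style DLog reduction (sampling $g_l=g_0^{s_l}y^{r_l}$, reading $\alpha$ off the collision relation $\prod_l g_l^{\Delta_{0,l}}=1$, with the $1/p$ degeneracy handled), and (ii) a precise replacement for the informal collision/no-collision split, namely $h(\Delta_0)=1$ versus $h(\Delta_0)\neq1$ after dividing the crafted check by the honest identity to get $\bigl(h(\Delta_0)^{a}\bigr)^{\textrm{v}}=\bigl(\hat H_1/h(w_1)\bigr)^{a}$. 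You also correctly flag, and handle, a point the paper glosses over: the paper writes $\hat\pi_1=(\hat z_1,h(\hat w_1))$, tacitly assuming the adversary's second coordinate is a valid hash image, whereas you treat $\hat H_1\in\mathbb Z_r$ as arbitrary and let the derived equation absorb it. The only cosmetic gap in your write-up is that the reduction's sampling $g_l=g_0^{s_l}y^{r_l}$ can occasionally produce the identity (order $1$ rather than $p$), so it does not perfectly match $\texttt{QueriesGen}$; this mismatch has probability at most $t/p$ and changes nothing, but is worth a one-line remark.
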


\begin{proof}
Without loss of generality, assume that Adversary $\mathcal{A}$ controls the server $S_1$ and set $j=1$. Let $\pi_1=(\mathbf{f}(u_1)^T\mathbf{x},h(\mathbf{f}_v(u_1)^T\mathbf{x}))=(z_1, h(w_1))$ and $\pi_2=(z_2, h(w_2))$ be the answers obtained by correctly executing algorithm $\texttt{AnswerGen}$ by each server. Let $\hat{\pi}_1=(\hat{z}_1, h(\hat{w}_1)$ be the answer chosen by $\mathcal{A}$ for $S_1$. 

Let $HC$ and $HC^c$ denote the event of hash collision and its complement, respectively. 
From the description of $\Pi_3$ and the law of total probability,
it is clear that 
\begin{align}
&\textrm{Pr}[\textrm{EXP}_{\mathcal{A},\Pi_3}^{PriV}(m,\mathbf{x},i,j)=1] \notag\\
& = \textrm{Pr}[\textrm{EXP}_{\mathcal{A},\Pi_3}^{PriV}(m,\mathbf{x},i,j)=1|HC]\textrm{Pr}[HC] \notag\\
& ~+ \textrm{Pr}[\textrm{EXP}_{\mathcal{A},\Pi_3}^{PriV}(m,\mathbf{x},i,j)=1|HC^c]\textrm{Pr}[HC^c]\notag\\
& \leq \textrm{Pr}[HC] + \textrm{Pr}[\textrm{EXP}_{\mathcal{A},\Pi_3}^{PriV}(m,\mathbf{x},i,j)=1|HC^c]\textrm{Pr}[HC^c]\notag\\
& \leq \textrm{Pr}[HC] + \textrm{Pr}[\textrm{EXP}_{\mathcal{A},\Pi_1}^{PriV}(m,\mathbf{x},i,j)=1]\leq \texttt{negl}(\lambda).\label{last}
\end{align}
The final inequality~\eqref{last} follows from the collision resistance of homomorphic hash functions under the DLog assumption in a multiplicative group of order $p\geq 2^{\lambda}$.
\end{proof}
\begin{remark}

We do note that the required properties of homomorphic hashes can be obtained from a more general cryptographic primitive known as linearly homomorphic vector commitment. More precisely, we can consider each element $y\in\FF_{p^l}$ as a vector in $\FF_p$ in the fixed basis $\mathcal{B}$ and apply the linearly homomorphic vector commitment to it in the same way as we applied homomorphic hashes before. For our purposes, it is important that the commitment scheme is binding, meaning that the commitment cannot be opened to values that are inconsistent with the committed vector under certain cryptographic assumptions. Additionally, it must have a homomorphic property, namely for any $y_1, y_2\in\FF_{p^t}$ and $\alpha_1, \alpha_2$, it holds that
\begin{align}
&\textrm{Commitment}(\alpha_1y_1+\alpha_2y_2)=\notag\\
&\alpha_1\textrm{Commitment}(y_1)+\alpha_2\textrm{Commitment}(y_2).  
\end{align}

For a detailed introduction to linearly homomorphic vector commitment schemes, we refer the reader to~\cite{Nazirkhanova2021, Kate2010}, and the references therein.
\end{remark}

\section{Comparisons}

Let us consider setup when we have $m$ files over a finite field $\mathbb{F}_{p^t}$ and possible solutions to the problem of verifiable PIR with optimized download cost. By Pr we denote the probability that an adversary with access to one server wins in the security experiment. We measure the file size, upload cost (UP), and download cost (DC) in bits. Also, we differentiate schemes by the notion of 1-verifiablity (1-verif.), 1-privacy (1-priv.), and presence of verifiability (verif.). The comparison is presented in Table~\ref{comp}. 

For comparison we take schemes $\Pi_0$, $\Pi_1$, $\Pi_2$ and $\Pi_3$ decribed in this paper. Also, we apply a two-server verifiable computation scheme of low-degree polynomials from \cite[Scheme~3]{zhangwang} at the top of scheme $\Pi_0$ to construct the two-server verifiable scheme. We note that this is the only scheme from \cite{zhangwang} that can be deployed in two server scenarios. It cannot be generalized to publicly verifiable cases, and we cannot apply homomorphic hashing schemes to it as we cannot separate server responses into two parts (one is responsible for the retrieval, and one is for the verification). The parameters of the resulting scheme are presented in column~$3$. 

Two server-verifiable schemes can also be created from scheme $\Pi_0$ by forming two independent queries for each server. The important thing here is that in comparison to previously described schemes, points of evaluation must be kept secret. Such an approach offers a higher probability that an adversary with access to one server wins in the security experiment. Let us formally describe it below and prove its $(1,\epsilon)$-verifiability. In that follows, we denote this scheme as $\mathbb{A}$. 

\begin{itemize}
    \item $\texttt{QueriesGen}(m,i)$: Choose $u_1, u_2\leftarrow \mathbb{F}_p$, $u_1\ne u_2$, $\tilde{u}_1, \tilde{u}_2\leftarrow \mathbb{F}_p$, $\tilde{u}_1\ne \tilde{u}_2$, $\textbf{r}\leftarrow \mathbb{F}_p^m$, $\textbf{r}_v\leftarrow \mathbb{F}_p^m$. Let $\mathbf{f}(u)=\mathbf{e}_i+\mathbf{r}u\in\mathbb{F}_p^m$ and $\mathbf{f}_v(u)=\mathbf{e}_i+\mathbf{r}_vu\in\mathbb{F}_p^m$.
    Compute $\sigma_j=(\mathbf{f}(u_j), \mathbf{f}_v(\tilde{u}_j))$ for all $j\in[2]$. Output $\textrm{vk}=\textrm{aux}_V=\{u_1, u_2, \tilde{u}_1, \tilde{u}_2\}$, $\sigma_1$ and $\sigma_2$.
    \item $\texttt{AnswerGen}(j, \sigma_j, \mathbf{x})$: Parse $\sigma_j$ as $(\mathbf{f}(u_j), \mathbf{f}_v(u_j))$, compute $z_j=\mathbf{f}(u_j)^T\mathbf{x}$, $w_j=\mathbf{f}_v(u_j)^T\mathbf{x}$, output $\pi_j=(z_j, w_j)$.
    \item $\texttt{Verify}(i,\textrm{vk},\pi_1,\pi_2, \textrm{aux}_V)$: Parse $\textrm{vk}=\textrm{aux}_V=\{u_1, u_2, \tilde{u}_1, \tilde{u}_2\}$. Retrieve $x_i$ as 
    \begin{equation}\label{12}
\begin{bmatrix}
    x_i\\ 
    \mathbf{r}^T\mathbf{x} 
    \end{bmatrix}
    =
    \begin{bmatrix}
    1 & u_1\\ 
    1 & u_2  
    \end{bmatrix}^{-1}
    \cdot
    \begin{bmatrix}
    z_1\\ 
    z_2 
    \end{bmatrix}
        =
    \begin{bmatrix}
    a & b\\ 
    c & d  
    \end{bmatrix}
    \cdot
    \begin{bmatrix}
    z_1\\ 
    z_2  
    \end{bmatrix}
\end{equation}
and 
\begin{equation}\label{22}
\begin{bmatrix}
    x_i\\ 
    \mathbf{r}_v^T\mathbf{x} 
    \end{bmatrix}
    =
    \begin{bmatrix}
    1 & \tilde{u}_1\\ 
    1 & \tilde{u}_2  
    \end{bmatrix}^{-1}
    \cdot
    \begin{bmatrix}
    w_1\\ 
    w_2  
    \end{bmatrix}
        =
    \begin{bmatrix}
    \tilde{a} & \tilde{b}\\ 
    \tilde{c} & \tilde{d}  
    \end{bmatrix}
    \cdot
    \begin{bmatrix}
    w_1\\ 
    w_2  
    \end{bmatrix}
\end{equation}

If the equation below holds, output $x_i$; otherwise, output $\perp$.
\begin{equation}
 az_1+bz_2=\tilde{a}w_1+\tilde{b}w_2\end{equation}
\end{itemize}
The proofs of correctness of this scheme and proof of privacy are almost identical to that of  \cite{BitarRouayheb} and omitted here. The download rate is equal to $\frac{1}{4}$.
\begin{theorem}\label{th::alternative}
The scheme $\mathbb{A}$ is $(1,\epsilon)$-verifiable where $\epsilon=\frac{2(p-1)}{(p-2)^2}$.
\end{theorem}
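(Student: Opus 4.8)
The plan is to imitate the proofs of the theorems for $\Pi_1$ and $\Pi_2$: identify precisely when the corrupted server's answer survives the equality test, rewrite that event as the vanishing of a fixed low-degree polynomial in the secret evaluation points, and then apply the Schwartz--Zippel lemma, paying a small correction factor for the distinctness constraints on those points.

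First I would put $j=1$ without loss of generality, so that the honest server $S_2$ returns $z_2=\mathbf{f}(u_2)^T\mathbf{x}$ and $w_2=\mathbf{f}_v(\tilde u_2)^T\mathbf{x}$, while $\mathcal{A}$ --- which can recompute the honest values $z_1=\mathbf{f}(u_1)^T\mathbf{x}$, $w_1=\mathbf{f}_v(\tilde u_1)^T\mathbf{x}$ from $\sigma_1$ and $\mathbf{x}$ --- submits $\hat z_1=z_1+\Delta_0$, $\hat w_1=w_1+\Delta_1$ for some $\Delta_0,\Delta_1$ of its choice. Since $a+b=1$, $au_1+bu_2=0$, and the analogous identities hold for $\tilde a,\tilde b,\tilde u_1,\tilde u_2$, a direct computation gives $az_1+bz_2=x_i=\tilde a w_1+\tilde b w_2$, so the two reconstructions inside $\texttt{Verify}$ are $\hat A:=a\hat z_1+bz_2=x_i+a\Delta_0$ and $\hat B:=\tilde a\hat w_1+\tilde b w_2=x_i+\tilde a\Delta_1$. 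The test $\hat A=\hat B$ holds iff $a\Delta_0=\tilde a\Delta_1$, in which case $\texttt{Verify}$ outputs $\hat A$; as $u_2\ne 0$ forces $a=u_2/(u_2-u_1)\ne 0$, this output differs from $x_i$ exactly when $\Delta_0\ne 0$. Hence $\mathcal{A}$ wins iff $\Delta_0\ne 0$ and $a\Delta_0=\tilde a\Delta_1$.

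Next I would clear the denominators $u_2-u_1$ and $\tilde u_2-\tilde u_1$, both nonzero by construction, turning the winning equation into
$$R(u_1,u_2,\tilde u_1,\tilde u_2):=u_2\tilde u_2(\Delta_0-\Delta_1)-u_2\tilde u_1\Delta_0+u_1\tilde u_2\Delta_1=0,$$
a polynomial of total degree $2$ whose coefficient of the monomial $u_2\tilde u_1$ equals $-\Delta_0$; so whenever $\Delta_0\ne 0$ the polynomial $R$ is not identically zero. Then I condition on $\mathcal{A}$'s view, which fixes $\Delta_0,\Delta_1$, hence $R$. The key point is that this view, namely $\sigma_1=(\mathbf{e}_i+\mathbf{r}u_1,\mathbf{e}_i+\mathbf{r}_v\tilde u_1)$, carries no information about the evaluation points: since $\mathbf{r},\mathbf{r}_v$ are independent and uniform, the posterior distribution of $(u_1,u_2,\tilde u_1,\tilde u_2)$ equals the prior, i.e.\ two independent, uniformly random ordered pairs of distinct elements of $\mathbb{F}_p\setminus\{0\}$.

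Finally I would bound $\textrm{Pr}[R=0]$ under that distribution. Passing to the independent uniform distribution on $(\mathbb{F}_p\setminus\{0\})^4$, the Schwartz--Zippel lemma gives $\textrm{Pr}[R=0]\le 2/(p-1)$; and since the distinctness events $u_1\ne u_2$ and $\tilde u_1\ne\tilde u_2$ have joint probability $\bigl((p-2)/(p-1)\bigr)^2$ under that distribution, conditioning on them inflates the bound by at most $\bigl((p-1)/(p-2)\bigr)^2$, which yields $\textrm{Pr}[\mathcal{A}\text{ wins}\mid\text{view}]\le \frac{2/(p-1)}{\bigl((p-2)/(p-1)\bigr)^2}=\frac{2(p-1)}{(p-2)^2}$. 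Averaging over $\mathcal{A}$'s view then gives the stated $\epsilon$. The step I expect to be most delicate is the conditioning argument --- making precise that $\sigma_1$ reveals nothing about the secret points beyond what the algebra already exploits, and that the passage from the constrained distribution on the points to the unconstrained one is legitimate --- whereas the algebraic manipulations in the first two steps are routine and essentially identical to those in the earlier proofs.
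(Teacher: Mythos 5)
Your proof is correct and takes essentially the same approach as the paper: reduce the adversary's winning event to the vanishing of a fixed degree-two polynomial in the secret evaluation points, bound the zero probability via Schwartz--Zippel over the unconstrained product domain $(\mathbb{F}_p\setminus\{0\})^4$, and pay a $\bigl((p-1)/(p-2)\bigr)^2$ correction factor for conditioning on the distinctness constraints $u_1\ne u_2$, $\tilde u_1\ne\tilde u_2$. The only cosmetic differences are that you derive $\hat A$ and $\hat B$ directly from the Lagrange identities $a+b=1$, $au_1+bu_2=0$ rather than computing $\hat A-A$ and $\hat V-V$, and that you phrase the final step as a conditional-probability inequality rather than an explicit count of bad tuples divided by $|L|$; neither changes the substance, and your polynomial $R$ is just $-H$ from the paper.
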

\begin{proof}
Without loss of generality, assume that Adversary $\mathcal{A}$ controls the server $S_1$ and set $j=1$. Let $\pi_1=(z_1, w_1)$ and $\pi_2=(z_2, w_2)$ be the answers obtained by correctly executing algorithm $\texttt{AnswerGen}$ by each server. Let $\hat{\pi}_1=(\hat{z}_1, \hat{w}_1)$ be the answer chosen by $\mathcal{A}$ for $S_1$.

From the description of $\mathbb{A}$ it is clear that $$x_i=A=az_1+bz_2$$ and $$x_i=V=\tilde{a}w_1+\tilde{b}w_2$$ while $$\hat{x}_i=\hat{A}=a\cdot\hat{z}_1+bz_2$$ and
$$\tilde{x}_i=\hat{V}=\tilde{a}\hat{w}_1+\tilde{b}w_2.$$

$\mathcal{A}$ wins the security experiment $\textrm{EXP}_{\mathcal{A},\mathbb{A}}^{PriV}(m,\mathbf{x},i,j)$ if the $\hat{A}\ne A$ and $\hat{V}=\hat{A}$. It is clear that $A=V$.
Hence, $$\hat{V}-V=\hat{A}-A.$$From equations~\eqref{12}, \eqref{22} and the fact that server $2$ is honest it is clear that
\begin{equation}
\hat{A}-A=a\left(\hat{z}_1-z_2\right)=a\Delta_0\ne0
\end{equation}
\begin{equation}
\hat{V}-V=\tilde{a}\left(\hat{w}_1-w_2\right)=\tilde{a}\Delta_1,
\end{equation}
where $a=\frac{u_2}{u_2-u_1}$ and $\tilde{a}=\frac{\tilde{u}_2}{\tilde{u}_2-\tilde{u}_1}$.
Hence $\mathcal{A}$  wins the security  experiment $\textrm{EXP}_{\mathcal{A},\Pi_1}^{PriV}(m,\mathbf{x},i,j)$ iff it finds the solution for the equation
\begin{align}
&G(u_1, u_2, \tilde{u}_1,\tilde{u}_2)=\frac{\tilde{u}_2}{\tilde{u}_2-\tilde{u}_1}\Delta_1-\frac{u_2}{u_2-u_1}\Delta_0 \notag\\
&=\frac{\tilde{u}_2(u_2-u_1)\Delta_1-u_2(\tilde{u}_2-\tilde{u}_1)\Delta_0}{(\tilde{u}_2-\tilde{u}_1)(u_2-u_1)}=0.
\end{align}
From the description of security experiment, it follows that $\Delta_0, \Delta_1$ are known to $\mathcal{A}$ and independent from $u_1, u_2, \tilde{u}_1,\tilde{u}_2$. Hence,
\begin{align}
&\textrm{Pr}[\textrm{EXP}_{\mathcal{A},\mathbb{A}}^{PriV}(m,\mathbf{x},i,j)=1] \notag \\
&\leq \textrm{Pr}(G(u_1, u_2, \tilde{u}_1,\tilde{u}_2)=0|\Delta_1\ne\Delta_0\ne0) \notag\\
&=\sum_{(c_1, c_2, c_3, c_4)\in L}\textrm{Pr}(G(c_1, c_2, c_3, c_4)=0|\Delta_1\ne\Delta_0\ne0)\notag\\
&~\cdot\textrm{Pr}[(u_1, u_2, \tilde{u}_1,\tilde{u}_2)=(c_1,c_2,c_3,c_4)]=\frac{g}{(q-1)^2(q-2)^2}, 
\end{align}
where $L=\{(c_1, c_2, c_3, c_4)|c_1, c_2, c_3, c_4\in\mathbb{F}_q\setminus\{0\}, c_1\ne c_2, c_3\ne c_4\}$
and $g$ is the number of $(c_1, c_2, c_3, c_4)\in L$ so that $G(c_1, c_2, c_3, c_4)=0$. Since the denominator of $G$ is non-zero, we can estimate $g$ as number of zeros of $H(u_1, u_2, \tilde{u}_1,\tilde{u}_2)=\tilde{u}_2(u_2-u_1)\Delta_1-u_2(\tilde{u}_2-\tilde{u}_1)\Delta_0$. Let $L'=\{(c_1, c_2, c_3, c_4)| c_1, c_2, c_3, c_4\in\mathbb{F}_q\setminus\{0\}\}$. It is clear that $L\subseteq L'$ and we can estimate $g$ from above as number of zeros of $H(\tau_1, \tau_2, \tau_3, \tau_4)$ where $\tau_1, \tau_2, \tau_3, \tau_4$ are chosen uniformly from $L'$. \textcolor{black}{Since $\Delta_0\ne0$}, by Schwartz-Zippel Lemma \cite{SZ1, SZ2} this value can be estimated from above as $\frac{2}{q-1}$. As a result, we have that $g\leq2(q-1)^3$, and the theorem statement follows.
\end{proof}
\balance
\section{Conclusion}
We considered the problem of reflecting malicious server behavior in private information retrieval schemes with optimized download rates. We focused on the extreme two-server case and propose generalizations of linear secret-sharing-based PIR schemes to verifiable setups. Considered schemes can detect the presence of one cheating server and offers information-theoretical private verifiability, computational public verifiability, and computational private verifiability with download rate close to those of non-verifiable scheme. \textcolor{black}{Extending the proposed framework to more than two servers or general linear PIR scheme are interesting open problems. Another possible research direction is an extension to accountability/Byzantine resistance.}

{\section*{Acknowledgements.}
This research/project is supported by the National Research Foundation, Singapore under its Strategic Capability Research Centres Funding Initiative, Singapore Ministry of Education Academic Research Fund Tier 2 Grants MOE2019-T2-2-083 and MOE-T2EP20121-0007, and the ARC grants DE180100768 and DP200100731. Any opinions, findings and conclusions or recommendations expressed in this material are those of the author(s) and do not reflect the views of National Research Foundation, Singapore.\\

Authors thank L.~F.~Zhang from ShanghaiTech University for numerous fruitful discussions during the preparation of this paper.}

\printbibliography
\end{document}